\def\ps@headings{%
\def\@oddhead{\mbox{}\scriptsize\rightmark \hfil \thepage}%
\def\@evenhead{\scriptsize\thepage \hfil \leftmark\mbox{}}%
\def\@oddfoot{}%
\def\@evenfoot{}}
\makeatother \pagestyle{headings}
\def\ps@headings{%
\def\@oddhead{\mbox{}\scriptsize\rightmark \hfil \thepage}%
\def\@evenhead{\scriptsize\thepage \hfil \leftmark\mbox{}}%
\def\@oddfoot{}%
\def\@evenfoot{}}
\makeatother \pagestyle{headings}
\newtheorem{theorem}{Theorem}
\newtheorem{remark}{Remark}
\newtheorem{proposition}{Proposition}
\newtheorem{definition}{Definition}
\begin{document}
%
\title{Caching Incentive Design in Wireless D2D Networks: A Stackelberg Game Approach}
\author{\IEEEauthorblockN{Zhuoqun Chen, Yangyang Liu, Bo Zhou and Meixia Tao}
\IEEEauthorblockA{Dept. of Electronic Engineering, Shanghai Jiao Tong University, Shanghai, China\\
 Email: \{billchen, bingpumpkn, b.zhou, mxtao\}@sjtu.edu.cn}
\thanks{This work is supported by the National Natural Science Foundation of China under grants 61571299 and  61329101.}
 }


\maketitle

\begin{abstract}
Caching in wireless device-to-device (D2D) networks can be utilized to offload data traffic during peak times. However, the design of incentive mechanisms is challenging due to the heterogeneous preference and selfish nature of user terminals (UTs). In this paper, we propose an incentive mechanism in which the base station (BS) rewards those UTs that share contents with others using D2D communication. We study the cost minimization problem for the BS and the utility maximization problem for each UT. In particular, the BS determines the rewarding policy to minimize his total cost, while each UT aims to maximize his utility by choosing his caching policy. We formulate the conflict among UTs and the tension between the BS and the UTs as a Stackelberg game. We show the existence of the equilibrium and propose an iterative gradient algorithm (IGA) to obtain the Stackelberg Equilibrium. Extensive simulations are carried out to evaluate the performance of the proposed caching scheme and comparisons are drawn with several baseline caching schemes with no incentives. Numerical results show that the caching scheme under our incentive mechanism outperforms other schemes in terms of the BS serving cost and the utilities of the UTs.
\end{abstract}

\section{Introduction}

The mobile data traffic is growing dramatically in recent years\cite{Cisco}. 
To deal with this exponential growth of traffic, content caching is introduced
because of the relatively low storage cost of wireless devices\cite{femto}.
Helper stations or user terminals (UTs) decide what contents to cache and proactively fetch them via backhaul during off-peak times, and
transmit the content to requesters during peak times. In this way, caching offloads the network traffic during peak times and reduces UTs' average delay cost.


Device-to-device (D2D) communication \cite{song2015wireless} can be utilized to enhance the benefits of caching.
D2D communication enables a UT to communicate directly with other UTs in his vicinity. When a UT's content request cannot be satisfied by his local cache, any of his neighbors who cache that content can become the content server and transmit the content using D2D communication. This technique improves spectrum utilization, increases network throughput, and reduces average access delay for UTs \cite{molisch2014caching}.
However, in reality, UTs are selfish and only care about their own preferences.
Each UT intends to cache his favorite contents and also hopes that his neighbors could cache
his other favorite contents as many as possible.

Consider a simple example of D2D caching network with 4 UTs and 10 contents $\{1,\cdots,10\}$. Let the size
of each content and the size of each UT's cache be 1. Assume that all UTs can communicate with each other.
Also assume that the content preference ranking of UT 1 and UT 2 are $(1,2,3,\cdots,10)$, $(10,9,8,\cdots,1)$,
respectively.
Naturally, UT 1 would cache content $1$ and wishes his three neighbors to cache contents $2$, $3$ and $4$,
whereas UT 2 would cache content $10$ and hopes that his neighbors would cache contents $9$, $8$ and $7$.
The difference in preference causes conflict of caching interest among UTs, which cannot be reconciled without intervention.

On the other hand, the base station (BS) aims to minimize the traffic load of serving the UTs, thus reducing the backhaul load and the transmission cost. This goal is equivalent to maximizing the chances of D2D communications among UTs.
The nature of selfishness of UTs however, becomes the major obstacle for the BS to achieve this goal.
Consider a D2D network with selfish UTs. Each UT cares merely about his own preference and only caches the contents he likes most.
This may cause duplicate caching and underutilization of the storage space for all UTs.
Consequently, the BS is overloaded by UTs' requests and the UTs suffer from larger delays.
Therefore, it is essential for the BS to introduce incentive mechanisms into the network to motivate UTs to cache in a way to promote D2D communication.

The interaction between the BS and the UTs, two rational entities with conflicting objectives, is often characterized using game theory.
Game theory is the study of mathematical models of conflict and cooperation between intelligent rational decision-makers \cite{myerson1991game}.
Specifically, the hierarchical relationship between the BS and the UTs best suits the Stackelberg game model.
In this game, the BS is the leader and decides the incentive mechanism, i.e. the rewarding policy.
The UTs are the followers and respond to the rewarding policy with their caching decisions of what contents to cache.

Various works have modeled the interaction between the BS and end-users by Stackelberg game models \cite{poularakis2014framework},\cite{wu2012game}.
However, limited work has addressed the circumstance where the utility of end-users are coupled.
In other words, conflicts also exist among the followers\cite{alotaibi2015game}.
In this paper, we model the interactions among the UTs as a UT sub-game and analyze the existence of the
Nash Equilibrium of the sub-game.

The contributions of this paper are as follows:
\begin{itemize}
\item We propose an incentive mechanism for UTs to cache in order to promote chances of D2D communication.
We model the interaction between the BS and the UTs as a Stackelberg game, and the conflicts among UTs as a sub-game of
the Stackelberg game.
\item We analyze our incentive mechanism in a special case of two UTs and two contents to obtain some insights of the effect of the rewarding policy of the BS on the caching decisions of the UTs.
\item We propose an iterative gradient algorithm (IGA) to obtain the Stackelberg Equilibrium (SE). The optimal rewarding
policy of the BS and the corresponding caching policies of all UTs are achieved at the SE.
\item We verify the effectiveness of our proposed caching scheme using the IGA algorithm by extensive simulations.
Numerical results show that our scheme outperforms other caching schemes with no incentives in terms of BS serving cost and the utilities of UTs .
\end{itemize}


The rest of the paper is organized as follows. Section II presents the system model. In Section III, we give the formulation of Stackelberg game. Section IV presents a special case with two UTs and two contents. We present the Iterative Gradient Algorithm in Section V. Section VI provides performance evaluation results, and we conclude the paper in Section VII. The important notations used in this paper are summarized in Table I.

\section{System Model}
\subsection{Network Model}
We consider a single cell wireless D2D network consisting of one BS and $N$ UTs.
Let $\mathcal{N}=\left\{1,2,\cdots,N\right\}$ denote the set of UTs.
Let $\mathcal{N}_{i}\subseteq \mathcal{N}\setminus\{i\}$ denote the set of neighbors of UT $i$. Each UT can communicate and share contents directly with all his neighbors via D2D links.
Denote $\mathcal{N}_{i}^+=\mathcal{N}_{i}\cup\{i\}$.
We denote $d_{i,j}$ as the delay cost between UT $i$ and UT $j$ using D2D communication. With abuse of notation, we set $d_{i,i}=0$ for all $i\in\mathcal{N}$.

\subsection{Cache Model}
Let $\mathcal{M}=\left\{1,2,\cdots,M\right\}$ denote the set of the contents in the network.
Each UT has a limited cache size, denoted as $c_{i}$.
We represent the cache states of UTs by using an $N\times M$ matrix $X$,
where $x_{i}^m$ indicates the proportion (between 0 and 1) of content $m$ that UT $i$ caches.
We assume that all the contents are encoded by rateless MDS coding (e.g., using Raptor codes \cite{shokrollahi2006raptor}).
With MDS coding, a content can be retrieved given that the receiver has received more than a certain
number of bits of the content in any order. Therefore, a request of a UT can be satisfied via multiple D2D
communications between various other UTs, as long as the total amount of the received bits exceeds a proportion-threshold.
Without loss of generality, we assume that this proportion-threshold is 1 for any content to be successfully retrieved.

We consider a heterogeneous request model, where the preferences of the UTs are different and
hence the popularity of one content varies from one UT to another. Let $p_i^m$ denote the probability of UT $i$ requesting content $m$, where $i\in \mathcal{N}$ and $m\in \mathcal{M}$. For each UT $i$, we require $\sum_{m=1}^{M}p_i^m =1$.
We also assume that UTs and BS know the preferences of every UT perfectly, i.e. $(p_i^m)_{i\in \mathcal{N},m\in \mathcal{M}}$ is a common knowledge within the network. Note that, the preference of each UT evolves at a timescale much slower than the timescale of content requesting, and it can be learned accurately by monitoring his activity \cite{femto}.
\vspace{-0.05cm}
\subsection{Service Model}
Each UT can be a requester and a content provider.
For instance, we consider UT $i$ requesting for content $m$.
He first checks his own cache. If there exists a whole copy of content $m$ in
his cache, the request is satisfied with no delay. Otherwise, he
asks for his neighbors who cache (a portion of) content $m$ and uses D2D service to retrieve the content.

To serve the request of UT $i$, we sort the D2D link delay costs $(d_{i,j})_{j\in\mathcal{N}_i^+}$ in an increasing order.
Let $(i)_j$ denote the  index of the UT with the $i$th lowest delay cost to UT $j$. Note that $(1)_j=j$.
First, UT $i$ checks his own cache and receives the service if certain portions of the content are cached.
If this service is not enough to satisfy the request, the UT with the second lowest delay cost serves UT $i$, i.e., the nearest UT of UT $i$.
This process continues until UT $i$ has obtained as many bits as the proportion-threshold for content $m$
(including the portion in his own cache).

If all other UTs' D2D service is still not enough to recover the
requested content for UT $i$, the rest part of the content is finally served by the BS. Let $d_{i,0}$ denote the
delay cost between UT $i$ and the BS. We assume that $d_{i,0} \gg d_{i,j}$, for any UT $j$, indicating that the delay
cost of service from the BS is much larger than that of any D2D service.

%


\begin{table}[tbp]
\caption{LIST OF IMPORTANT NOTATIONS}\label{tablenotation}
\begin{tabular}{|c|c|}
\hline
$\mathcal{M}=\{1,2,\cdots,M\}$ & set of contents\\
\hline
$\mathcal{N}=\{1,2,\cdots,N\}$ & set of UTs\\
\hline
$\mathcal{N}_{i}\subseteq \mathcal{N}\setminus\{i\}$ & set of neighbors of UT $i$\\
\hline
$c_i$ &  cache size of UT $i$\\
\hline
$s_m$ &  size of content $m$\\
\hline
$\mathbf{x}_i=(x_{i}^m)_{m\in\mathcal{M}}\in\left[0,1\right]^{1\times M}$ & $x_{i}^m$ is the proportion of \\$\mathbf{X}=(\mathbf{x}_i)_{i\in\mathcal{N}}\in\left[0,1\right]^{N\times M}$
& content $m$ that is cached by UT $i$.\\
\hline
$d_{i,j}$ & delay cost of transmitting one bit\\&  between UT $i$ and UT $j$\\
\hline
$d_{i,0}$ & delay cost of transmitting one bit\\&  from BS to UT $i$\\
\hline
$(i)_j$& UT index with the $i$th lowest\\& delay cost to UT $j$, $(1)_j=j$\\
\hline
$[i]_j$ & ranking of $d_{i,j}$ in $(d_{l,j})_{l\in\mathcal{N}_j^+}$,\\ &$[j]_j=1$ \\
\hline
$p_i^m$ &  probability that UT $i$\\&  requests content $m$\\
\hline
$r$&  unit reward paid from BS to UT\\ & for D2D service\\
\hline
$w_s$ & unit serving cost of BS\\& for serving UTs' requests\\
\hline
$w_d$ & weight of delay cost\\
\hline
\end{tabular}
\centering

\end{table}

\section{Stackelberg Game Formulation}
\subsection{Incentive Mechanism}
We design an incentive mechanism under which the BS rewards the UTs based on the amount of content they serve their neighbor UTs. The mechanism can enhance the chances of D2D transmission between UTs and hence release the burden on BS. The UTs can benefit from receiving rewards by serving other UTs and enjoying smaller delay with more D2D service. The BS can benefit from reducing the workload and thus lowering the operational cost.
Note that the reward could be of any form, such as monetary value or virtual credits, and is paid to the UTs under specific protocols, all of which are out of the scope of this paper.

Stackelberg game is an extension of non-cooperative game with a bi-level hierarchy. Stackelberg game models a game between two groups of players, namely leaders and followers. The leaders have the privilege of acting first while the followers act according to the leaders' actions.

We formulate our problem into a single-leader multi-follower Stackelberg game. The BS acts as the leader and the UTs are the followers. The BS first announces the unit reward $r$ of caching for the purpose of D2D. The UTs then determine their caching strategies to maximize their utilities based on the announced reward. The Stackelberg game consists of two sub-problems: the UT sub-game and the BS optimization.

\subsection{UT Sub-game}
In the Stackelberg formulation, each UT finds his optimal caching policy based on the unit reward announced by the BS, as well as the caching policy of all other UTs.
For the UT sub-game, every UT intends to maximize his utility $U_i$ which is the difference of reward received from BS
and his total delay cost. The reward a UT receives from the BS is proportional to the total amount of content he serves his neighbors. The caching placement problem of UT $i$ is given by
\begin{align} \max_{\mathbf{x}_i} & \ U_i = \sum_{m\in\mathcal{M}}\sum_{j\in\mathcal{N}_i} p_j^mrs_{m}F_{i,j}^m - w_d\sum_{m\in\mathcal{M}}p_i^ms_m\bar{D}_{i}^{m} \\
s.t.~&\sum_{m\in\mathcal{M}}x_i^ms_m\leq c_i\\
~&\mathbf{x}_i \in [0,1]^{1\times M}
\end{align}
where $F_{i,j}^m = \min\left\{x_i^m,\max\left\{0,1-\sum_{k=1}^{\left[i\right]_j - 1}x_{\left(k\right)_j}^m\right\}\right\}$
is the portion of content $m$ that UT $i$ serves UT $j$ via D2D communication, and $w_d$ is the weight of the delay cost.
$\bar{D}_{i}^{m}$ is the average delay of UT $i$ requesting content $m$. We adopt the notations in \cite{femto} to define the UT's delay cost. The average delay of UT $i$ requesting content $m$ is given by
$$ \bar{D}_{i}^{m}=\left\{
\begin{aligned}
\bar{D}_{i}^{m,1} & \quad  & \text{if}\ x_{\left(1\right)_i}^m \geq 1\\
\vdots & \quad  & \vdots\\
\bar{D}_{i}^{m,j} & \quad  & \text{if}\ \sum_{k=1}^{j-1}x_{\left(k\right)_i}^m < 1,\ \sum_{k=1}^{j}x_{\left(k\right)_i}^m \geq 1\\
\vdots & \quad  & \vdots\\
\bar{D}_{i}^{m, \left| \mathcal{N}_i \right| +1} & \quad  & \text{if}\ \sum_{k=1}^{\mathcal{N}_i}x_{\left(k\right)_i}^m < 1
\end{aligned}
\right.
$$
where $\bar{D}_i^{m,j}$ denotes the average delay cost per bit for UT $i$ to download content $m$ from his best $j$ neighbors, and is given by
\begin{equation}
\bar{D}_{i}^{m,j} = \sum_{k=1}^{j-1}x_{\left(k\right)_i}^md_{(k)_i,i} + \left(1-\sum_{k=1}^{j-1}x_{\left(k\right)_i}^m\right)d_{(j)_i,i}.
\end{equation}
\\
Constraint (2) indicates that the amount of content in each UT's cache cannot exceed his cache size. Constraint (3) requires that a UT cannot cache more than the size of a content.

\subsection{BS Optimization}

In our Stackelberg game formulation, the BS makes the first move by determining the unit reward of caching for D2D communications to minimize his total cost $C$. The total cost $C$ is made up of two parts: the reward cost and the serving cost. The reward cost is the total amount of reward given to the UTs and the serving cost is the cost of serving all remaining requests of the UTs. We assume that the reward that UT $i$ receives for serving UT $j$ is proportional to the amount of content UT $i$ actually serves. Hence, the BS's total cost is given by
\begin{align}
C = \sum_{m\in\mathcal{M}}\sum_{i\in\mathcal{N}}&p_i^ms_m\Bigg[\sum_{j\in\mathcal{N}_i}rF_{j,i}^m \nonumber\\&+ w_{s}d_{i,0}\max\Big\{0,1-\sum_{j\in\mathcal{N}_i^+}x_{j}^{m}\Big\}\Bigg],
\end{align}
where $w_{s}$ is the unit serving cost for BS. The serving cost of BS to serve UT $i$ is proportional to the delay cost between UT $i$ and the BS.

The BS tries to find the optimal unit reward $r^{*}$ that minimizes the cost $C$. Therefore, the optimal rewarding policy for BS is
\begin{equation}
r^* := \arg \min_{r \geq 0} C
\end{equation}

\subsection{Nash Equilibrium of UT Subgame}
We now show the existence of subgame perfect Nash Equilibrium for the UT sub-game between $N$ followers.
\begin{proposition}
There exists at least one Nash Equilibrium for the UT sub-game.
\end{proposition}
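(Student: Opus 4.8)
The plan is to invoke the classical existence theorem for non-cooperative games (Debreu--Glicksberg--Fan): a Nash Equilibrium exists whenever each player's strategy set is nonempty, compact and convex, each payoff is jointly continuous in the strategy profile, and each payoff is quasi-concave in the player's own strategy. I would verify these three hypotheses for the UT sub-game in turn. First, the strategy set: for UT $i$ the feasible set $S_i=\{\mathbf{x}_i\in[0,1]^{1\times M}:\sum_{m}x_i^m s_m\le c_i\}$ is the intersection of the unit box with a half-space, hence nonempty (it contains $\mathbf{0}$), closed, bounded and convex, i.e. compact and convex. Second, continuity: the service term $F_{i,j}^m$ is built from $\min$, $\max$ and affine functions of $\mathbf{X}$, so it is jointly continuous, and for the delay one must check that the piecewise definition of $\bar{D}_i^m$ glues continuously across the thresholds $\sum_{k=1}^{j}x_{(k)_i}^m=1$; substituting the boundary condition into the consecutive branches $\bar{D}_i^{m,j}$ and $\bar{D}_i^{m,j+1}$ shows both reduce to $\sum_{k=1}^{j}x_{(k)_i}^m d_{(k)_i,i}$, so $\bar{D}_i^m$, and therefore $U_i$, is continuous.

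The third hypothesis is where the real work lies: concavity of $U_i$ in $\mathbf{x}_i$. With the other UTs' policies frozen, the prefix sum $\sum_{k=1}^{[i]_j-1}x_{(k)_j}^m$ does not involve $x_i^m$, since the ranking is a bijection and UT $i$ sits at position $[i]_j$, which is excluded from that prefix. Thus $F_{i,j}^m=\min\{x_i^m,A_{i,j}^m\}$ with $A_{i,j}^m$ constant in $\mathbf{x}_i$; this is concave in $x_i^m$, and the reward term, a nonnegative combination of such terms, is concave. For the delay term it suffices to show $\bar{D}_i^m$ is convex in $x_i^m$. Here I would use that $x_i^m=x_{(1)_i}^m$ is UT $i$'s own zero-delay contribution (recall $d_{i,i}=0$), so every cumulative sum contains $x_i^m$: as $x_i^m$ grows the threshold $\sum_{k=1}^{j}x_{(k)_i}^m\ge1$ is met at a smaller index $j$, i.e. the request is covered using fewer and nearer helpers. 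Concretely $\bar{D}_i^m$ is piecewise linear in $x_i^m$ with slope $-d_{(j)_i,i}$ on branch $j$, and since the neighbors are ordered by increasing delay the slope increases (toward $0$) as $x_i^m$ increases, which is exactly convexity. Hence $-\bar{D}_i^m$ is concave, the delay term (separable across $m$) is concave, and $U_i$ is concave, a fortiori quasi-concave, in $\mathbf{x}_i$.

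The main obstacle is precisely this last step: the piecewise-linear delay $\bar{D}_i^m$ must be shown convex, which hinges on the monotone ordering of the $d_{(j)_i,i}$ together with the zero self-delay $d_{i,i}=0$, and one must also confirm the boundary matching between branches so that no downward jump destroys convexity. Establishing the slope monotonicity and the continuity across regimes is the only non-routine part; everything else (compactness, convexity of $S_i$, concavity of the reward term) is immediate. With joint continuity, compact-convex strategy sets and own-strategy concavity in hand, the cited fixed-point theorem delivers at least one Nash Equilibrium for the UT sub-game, as claimed.
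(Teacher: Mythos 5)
Your proposal is correct and follows essentially the same route as the paper: verify that each UT's strategy set is compact and convex, that $U_i$ is continuous in $\mathbf{X}$ and concave in $\mathbf{x}_i$, and then invoke a classical fixed-point existence theorem for concave games (the paper cites Rosen's concave-game framework together with the Schauder fixed-point theorem, which plays the same role as your Debreu--Glicksberg--Fan citation). The only difference is one of self-containedness: where the paper outsources the convexity of $\bar{D}_i^m$ to the reference \cite{femto} and leaves the concavity of the reward term implicit, you prove both directly (the slope-monotonicity argument for the piecewise-linear delay and the observation that $F_{i,j}^m=\min\{x_i^m,A_{i,j}^m\}$ with $A_{i,j}^m$ independent of $\mathbf{x}_i$), which strengthens rather than alters the paper's argument.
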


\begin{proof}
The players' strategy space is a closed bounded convex set. Since $\bar{D}_i^m$ is a convex function of $\mathbf{X}$ \cite{femto}, the utility function of UT $i$ is concave in $\mathbf{x}_i$. The utility function $U_i$ is also continuous in $\mathbf{X}$. Therefore, the UT sub-game is a concave game \cite{rosen1965existence}. By Schauder fixed-point theorem \cite{goebel1990topics}, the existence of Nash Equilibrium in UT sub-game is proved.
\end{proof}

\subsection{Stackelberg Equilibrium}
The purpose of the proposed game is to reach the Stackelberg Equilibrium (SE), from which neither the leader (BS) nor the followers (UTs) have any incentive to deviate.
The SE for the proposed game is defined as follows.
\begin{definition}
Let $r^*$ be a solution to the BS optimization and let $\mathbf{x}_i^* = \mathbf{x}_i(r^*)$ be a solution to the UT sub-game of the UT $i$ given the BS reward $r^*$. Then, $(r^*,\mathbf{X}^*)$ is SE for the proposed Stackelberg game if for any $(r,\mathbf{X})$ in the feasible region, the following conditions are satisfied:
\begin{equation}
C(r^*,\mathbf{X}^*)\leq C(r,\mathbf{X}^*),
\end{equation}
\begin{equation}
U_i(\mathbf{x}_i^*,\mathbf{x}_{-i}^*,r^*)\geq U_i(\mathbf{x}_i,\mathbf{x}_{-i}^*,r^*), \forall i \in \mathcal{N}
\end{equation}
\end{definition}


\section{Special Case with Two UTs and Two Contents}
We now consider a special case of two UTs ($N = 2$) and two contents ($M=2$) to gain some insights into the impact of reward value $r$ towards UTs' caching decisions in our model.  We assume that the two contents have the same size, i.e., $s_1=s_2=1$, and the cache size for each UT is 1, i.e., $c_1=c_2=1$. Here we only investigate the caching policies of UT 1 where UT 2 has the same properties as UT 1.

First, we rewrite the utility function of UT 1 (UT 2 has the same format),
\begin{align}
& \max \quad\sum_{m=1}^{2}\big[p_2^mr \min\{x_1^m,1-x_2^m\}- \label{eqn:obj}\\
& w_dp_1^m\max\{\left(1-x_1^m\right)d_{1,2},\left(1-x_1^m-x_2^m\right)d_{1,0}+x_2^md_{1,2}\}\big]\nonumber\\
& s.t. \quad x_1^1 + x_1^2 \leq 1
\end{align}


Then, we have the following properties.

\begin{theorem}
For the case of $N=2$, $M=2$, $s_1=s_2=1$ and $c_1=c_2=1$, the optimal caching policy of UT 1 is as follows.
\begin{itemize}
  \item
  If $0 \leq r<\left|\frac{p_1^1-p_1^2}{p_2^1-p_2^2}\right|w_{d}d_{1,0}$, the optimal caching policy of UT 1 is to completely cache content $m_{1}^{*}$, which is given by
  \begin{equation}
    m_{1}^{*} = \arg\max_{k=1,2}p_{1}^{k}.
  \end{equation}
  \item
  If $r \geq \left|\frac{p_1^1-p_1^2}{p_2^1-p_2^2}\right|w_{d}d_{1,0}$, the optimal caching policy of UT 1 is to completely cache content $m_{1}^{*}$, which is given by
  \begin{equation}
    m_{1}^{*} = \arg\max_{k=1,2}p_{2}^{k}.
  \end{equation}
\end{itemize}

In both cases, UT 2 caches the content which is not cached by UT 1.
\end{theorem}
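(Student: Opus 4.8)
The plan is to collapse UT~1's two-variable placement problem into a one-dimensional piecewise-linear optimization, read off its maximizer as a function of the reward $r$, and then use the symmetry between the two UTs to fix UT~2's response. Throughout I would lean on the concavity of $U_1$ in $\mathbf{x}_1$ established in Proposition~1, which guarantees that any stationary/corner point I exhibit is in fact a global maximizer, so I never have to worry about spurious local optima.

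First I would show the budget constraint $x_1^1+x_1^2\le 1$ is active at every optimum. Inspecting UT~1's objective, raising either $x_1^m$ within the budget can only enlarge the reward term $p_2^m r\min\{x_1^m,1-x_2^m\}$ and can only shrink the delay term, since $\max\{(1-x_1^m)d_{1,2},(1-x_1^m-x_2^m)d_{1,0}+x_2^m d_{1,2}\}$ is non-increasing in $x_1^m$. Hence it is never harmful to fill the cache, and I may substitute $x_1^2=1-x_1^1$ and set $t:=x_1^1\in[0,1]$. The reduced objective $U_1(t)$ is a sum of a $\min$ and a $\max$ of affine functions of $t$, so it is continuous and piecewise linear, with breakpoints occurring exactly where UT~1's holdings complement UT~2's, i.e.\ where $x_1^m+x_2^m=1$.

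Next I would compute the slope of $U_1$ on each linear piece. Using the separation $d_{i,0}\gg d_{i,j}$, the sign of each slope is controlled by the $d_{1,0}$ terms, which lets me argue that a content-splitting interior point is never strictly optimal, so the candidate maximizers are the pure placements $t=1$ (cache content~$1$) and $t=0$ (cache content~$2$); this is the step that delivers ``cache a single content completely.'' I would then compare the two pure placements directly. Caching content~$m$ fully earns the reward $p_2^m r$ for serving UT~2 while leaving UT~1's own requests for the other content to be recovered from the BS at delay $d_{1,0}$, so the two corner values of $U_1$ differ by $(p_2^1-p_2^2)r+w_d(p_1^1-p_1^2)d_{1,0}$. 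Setting this to zero yields precisely $r^*=\left|\frac{p_1^1-p_1^2}{p_2^1-p_2^2}\right| w_d d_{1,0}$: for $r<r^*$ the delay term dominates and UT~1 retains its own favorite $\arg\max_{k}p_1^k$, while for $r\ge r^*$ the reward term dominates and UT~1 switches to UT~2's favorite $\arg\max_{k}p_2^k$. Running the identical argument with the two UTs interchanged gives UT~2's pure best response, and checking that the two pure placements are mutually consistent yields the final claim that UT~2 caches the complementary content.

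The main obstacle I expect is ruling out fractional, content-splitting optima in the slope analysis. Because a UT's best response tends to complement the other's holdings exactly, the breakpoint $x_1^m+x_2^m=1$ is itself a natural candidate maximizer, so the argument that the optimum sits at a genuine corner $t\in\{0,1\}$ hinges delicately on $d_{1,0}\gg d_{1,2}$ together with the equilibrium requirement that UT~2 also hold a single content. A related subtlety is accounting for where the declined content is served: the threshold carries $d_{1,0}$ only if that content is treated as BS-served in the corner comparison, so I would need to be careful, across the several $\min$/$\max$ cases, about which content is served by whom, and to verify the mutual-best-response condition that makes the complementary single-content profile an equilibrium.
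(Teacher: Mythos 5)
Your reduction to a one-dimensional piecewise-linear problem and the monotonicity argument for filling the cache are both sound, and you do land on the correct threshold expression. But the crux of your proof---the claim that the two corner values of $U_1$ differ by $(p_2^1-p_2^2)r+w_d(p_1^1-p_1^2)d_{1,0}$---is not true in the configuration where you need it, and the subtlety you flag at the end is precisely the gap, not a side issue. That difference formula holds only when the content UT 1 declines is served by the BS at cost $d_{1,0}$ in \emph{both} corner evaluations, i.e., when UT 2 does not cover it. At the equilibrium profile the theorem asserts, UT 2 caches exactly the declined content, so UT 1's corner comparison against UT 2's fixed complementary strategy is instead $p_2^{A}r - w_d p_1^{B} d_{1,2} + w_d p_1^{A} d_{1,0}$ (caching $A$ versus caching $B$ when UT 2 holds $B$). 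Under the standing assumption $d_{1,0}\gg d_{1,2}$ this quantity is positive for every $r\geq 0$ and for either labeling of $A,B$; hence both complementary pure profiles are Nash equilibria of the sub-game for essentially all $r$, and your final ``mutual best response'' check cannot select between them, let alone produce an $r$-dependent threshold. In short, the theorem's threshold does not come from a best-response corner comparison at the equilibrium profile.

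What the paper does instead---and what your outline is missing---is a partition of the \emph{joint} cache-state space: case 1 where $x_1^m+x_2^m\leq 1$ for both contents, and case 2 where one content is jointly over-cached. In case 1 the min/max operators resolve so that UT 1's utility is linear in $\mathbf{x}_1$ with coefficients $p_2^m r + w_d p_1^m d_{1,0}$ (the declined content is BS-served by construction of the region), which is where the threshold $\left|\frac{p_1^1-p_1^2}{p_2^1-p_2^2}\right| w_d d_{1,0}$ genuinely originates; UT 2's complementarity is then forced by the region constraint $x_1^m+x_2^m\leq 1$ itself, not derived by symmetry. (Your symmetric step is also problematic on its own: UT 2's threshold would involve $d_{2,0}$ and the reciprocal preference ratio, and if both UTs share the same favorite content the two single-UT solutions need not be complementary at all.) The paper completes the argument by showing the maximum utility attainable in case 2 is dominated by that of case 1. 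Without this case decomposition---or an equivalent justification for evaluating the corners ``as if'' the declined content were BS-served---your slope and corner analysis establishes a different, $r$-independent best-response condition, so the proof as outlined does not deliver the theorem.
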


\begin{proof}
By observing the objective function in \eqref{eqn:obj}, we only need to consider have the following two cases.

In the first case, i.e., $x_1^1+x_2^1 \leq 1$ and $x_1^2+x_2^2 \leq 1$, the optimization problem for UT 1 in \eqref{eqn:obj} can be transformed into:
\begin{align}
\max  & \quad\sum_{m=1}^{2}(p_2^mr+w_dp_1^md_{1,0})x_1^m + (d_{1,0}-d_{1,2})w_dp_1^mx_2^m \nonumber\\
& - w_dp_1^md_{1,0} \\
s.t. & \ 0 \leq x_1^1+x_1^2 \leq 1
\end{align}

We can see that, when $r=0$, then the optimal solution for the above problem is $(x_1^1,x_1^2)=(1,0)$ if $p_1^1-p_1^2\leq 0$, and $(x_1^1,x_1^2)=(0,1)$ otherwise. When $r\neq 0$, the optimal solution is $(x_1^1,x_1^2)=(1,0)$ if $(p_{2}^{1}-p_{2}^{2})r+w_{d}d_{1,0}(p_{1}^{1}-p_{1}^{2})>0$ and $(p_{2}^{1}-p_{2}^{2}) \geq 0$, and $(x_1^1,x_1^2)=(0,1)$ otherwise. That is, when $r-\left|\frac{p_1^1-p_1^2}{p_2^1-p_2^2}\right|w_{d}d_{10}>0$, the UT 1 would cache the content according to UT 2's preference. Moreover, since $x_1^1+x_2^1\leq 1$ and $x_1^2 + x_2^2\leq 1$, we know that UT 2 would cache the content which is different from UT 1.

In the second case, i.e., $x_1^1+x_2^1\leq 1$ and $x_1^2+x_2^2\geq 1$, the optimization problem in \eqref{eqn:obj} can be transformed into:
\begin{align}
\max & \quad w_{d}p_{1}^{1}d_{1,2}x_{1}^{1}-p_{2}^{1}rx_{1}^{1}-p_{2}^{1}rx_{2}^{1}-w_{d}p_{1}^{1}d_{1,2}+p_{2}^{1}r+\nonumber\\
& (p_{2}^{2}r+w_{d}p_{1}^{2}d_{1,0})x_{1}^{2}+(d_{1,0}-d_{1,2})w_{d}p_{1}^{1}x_{2}^{2}-w_{d}p_{1}^{2}\nonumber \\
s.t. & \ 0 \leq x_1^1+x_1^2 \leq 1
\end{align}

This objective function is a linear combination of $x_1^1$ and $x_1^2$. We can easily observe that, the maximum value in the second case is smaller than or equal to that in the first case.

We complete the proof.
\end{proof}

\begin{remark}
When $r$ is small, that is, there is few incentive in the wireless D2D caching network, then each UT only caches the content according to his own interest.
When $r$ is very large, that is, there are enough incentives, then each UT only caches the content according to the interest of the other UT.
\end{remark}

\section{Iterative Gradient Algorithm}
In this section, we present an iterative gradient algorithm (IGA) to obtain the SE for the D2D caching game.

    \begin{algorithm}[h]
    \caption{Iterative Gradient Algorithm}\label{euclid}
    \begin{algorithmic}[1]
   	\State Initialize cache state $\mathbf{X}$
    \Function {UT\_Game}{$\mathbf{X}, r$}

        \For {each UT $i \in \mathcal{N}$ and \ content $m \in \mathcal{M}$}
            \State $grad_i^m = \frac{U_i(x_i+\delta,\ \mathbf{x}_{-i}) - U_i(\mathbf{X})}{\delta}$
        \EndFor
        \State Update $\mathbf{X}$: $x_i^m \leftarrow x_i^m + grad_i^m \times \gamma, \ \forall i\in\mathcal{N},m\in \mathcal{M}$
        \If {$\exists i$ such that $\sum_{m=1}^{M}x_i^ms_m > c_i$}
            \State $\mathbf{x}_i \leftarrow \omega\mathbf{x}_i$ so that cache size constraint is satisfied
        \EndIf
    \State Repeat steps from 3 to 9 until convergence
    \EndFunction
    \State

    \Function {BS\_Opt}{$\mathbf{X}, r$}
    \State BS set $r = 0$
        \State \Call{UT\_Game}{$\mathbf{X}, r$}
        \State Compute BS cost $C(\mathbf{X},r)$
        \State $r \leftarrow r + \Delta_r$
    \State Repeat steps from 15 to 17 until $C$ starts to increase w.r.t $r$
    \EndFunction

        \end{algorithmic}
    \end{algorithm}

The iterative gradient algorithm involves numerous rounds of interactions between the BS and the UTs. The BS starts by setting the unit reward to zero and starts the game. At the followers' side, the goal is to achieve the sub-game Nash equilibrium. We use gradient projection method to reach the equilibrium point, where $\gamma$ is the step size. All UTs simultaneously update their cache state. After each update of cache state $\mathbf{x}_i$, we have to check whether $\mathbf{x}_i$ still belongs to the feasible domain. When all UTs' utilities and cache states converges, the UT sub-game obtains its equilibrium point.

The BS computes his total cost based on the cache state of UTs. The BS then increases the unit reward $r$ by a tiny amount $\Delta_r$ and proceeds the above interactions iteratively. The BS total cost $C$ decreases with regard to $r$ when $r$ is small. The decrease of BS total cost is due to the changing of caching policy of UTs. The UTs are motivated by higher rewards to cache more amount of contents to serve other UTs, and thus reduce the serving  cost of BS. This decreasing trend however, cannot last for long since the benefits BS can get from ``better'' cache states is bounded. After a number of iterations, the rapid increase of reward cost would finally outstrip the decrease of serving cost. The BS has to find the point where his total cost $C$ ceases to decrease, which is the SE of the proposed game.

Note that the UTs' cache state can be initialized using different approaches, such as initializing to zero or caching the favorite contents. The choice of initialization method does not affect the convergence in UT Sub-game. In addition, the values of $\delta$, $\gamma$ and $\Delta_r$ should be sufficiently small to guarantee the convergence of the algorithm.

\section{Performance Evaluation}
In this section, we present the numerical results to evaluate the effectiveness of the proposed Stackelberg game model and the algorithm.
\subsection{Simulation Setup}
For the numerical analysis, we consider a D2D caching network with $N=8$ UTs and $M=20$ contents. Every UT has the same cache size of $c_i=2$, and can communicate with any UT in the network. The size of each content $s_m$ is equal, and is normalized to one. We also set the weight parameters $w_d = 0.5$ and $w_s = 20$.

We set the delay cost of each pair of UTs to be uniformly picked within the range $\left(0,1\right)$ and the delay cost between UT and the BS within the range of $\left(1,6\right)$. The symmetric delay cost matrix $\mathbf{D}$ is given as follows, where the element $\mathbf{D}_{i,j}$ is the delay cost between UT $i$ and UT $j$ for $j < N+1$, and $\mathbf{D}_{i,N+1}$ is the delay cost between UT $i$ and the BS.

\begin{scriptsize}
\begin{equation*}
\mathbf{D} =
\begin{bmatrix}
0 &   &   &   & \cdots & & &  & \\
0.916 & 0 &   &   & \cdots & & &  & \\
0.806 & 0.633 & 0 &   &   &   &   &   &  \\
0.117 & 0.171 & 0.298 & 0 &   &   &   &   &  \\
0.963 & 0.717 & 0.791 & 0.666 & 0 &   &   &   & \\
0.017 & 0.683 & 0.228 & 0.521 & 0.297 & 0 &   &   &  \\
0.090 & 0.180 & 0.828 & 0.798 & 0.499 & 0.534 & 0 &   &  \\
0.904 & 0.097 & 0.973 & 0.402 & 0.863 & 0.896 & 0.879 & 0 &  \\
2.297 & 5.398 & 2.030 & 3.206 & 5.947 & 5.201 & 3.546 & 1.040 & 0
\end{bmatrix}
\end{equation*}
\end{scriptsize}

We use Zipf distribution \cite{zipf} to model the request pattern of a certain UT. The Zipf parameter $\alpha$ determines the skewness of the popularity distribution. To characterize the heterogeneous request pattern among UTs, we randomly permutate the request probability vector and assign it to different UTs.
\subsection{Baselines and Performance Criteria}
We compare the performance of the following caching schemes:
\begin{enumerate}
\item Random Complete Caching (RCC): Each UT $i\in\mathcal{N}$  randomly selects $c_i$ out of $M$ contents to cache.
\item Greedy Caching (GC): Each UT $i\in\mathcal{N}$  caches his top $c_i$ favorite contents.
\item Preference-Aware Caching (PAC): Each UT $i$ allocates $p_i^k$ of his cache size for content $k$.
\item Fair Caching (FC): Each UT caches the same proportion of every content.
\item Stackelberg Caching (SC): The proposed caching scheme using Stackelberg game model.
\end{enumerate}

The performance criteria we consider are the BS's serving cost and UTs' utilities, given by equation (4).

\subsection{Numerical Results}
\subsubsection{Effect of Reward}
We first study the effects of the reward on the BS and the UTs using the proposed iterative gradient algorithm.

The evolution of the BS total cost and serving cost is shown in Figure~\ref{fig:BScosts}. The total cost of BS consists of the reward given to UTs and the cost of serving UTs' requests. For the BS serving cost, we observe that it first quickly decreases as the unit reward increases. After a few iterations, the decrease in serving cost becomes slower and it eventually decreases to a constant. As the reward gets higher, the UTs are more motivated to cache in favor of D2D, thus reducing the serving cost from BS. However the margin gain is gradually diminishing as long as most of the favorite contents of each UT have already been cached within his neighborhood.

As for the BS total cost, the trend is similar to that of serving cost when the reward is low. As unit reward $r$ further increases, the BS total cost increases approximately linearly.  This is due to the diminishing margin gain of BS serving cost and the linear growth of BS reward cost.

The evolution of average UT utility is shown in Figure~\ref{fig:UTevo}.  For completeness, we also depict the minimum and maximum values of UT utility with error bars. At the start of the game, the utilities of all UTs are negative since their utilities are entirely delay cost.  As the unit reward from the BS increases, the utility of each UT increases and eventually becomes positive.
\begin{figure}[t]
\begin{minipage}[t]{.5\linewidth}
\centering
        \includegraphics[width=4.4cm]{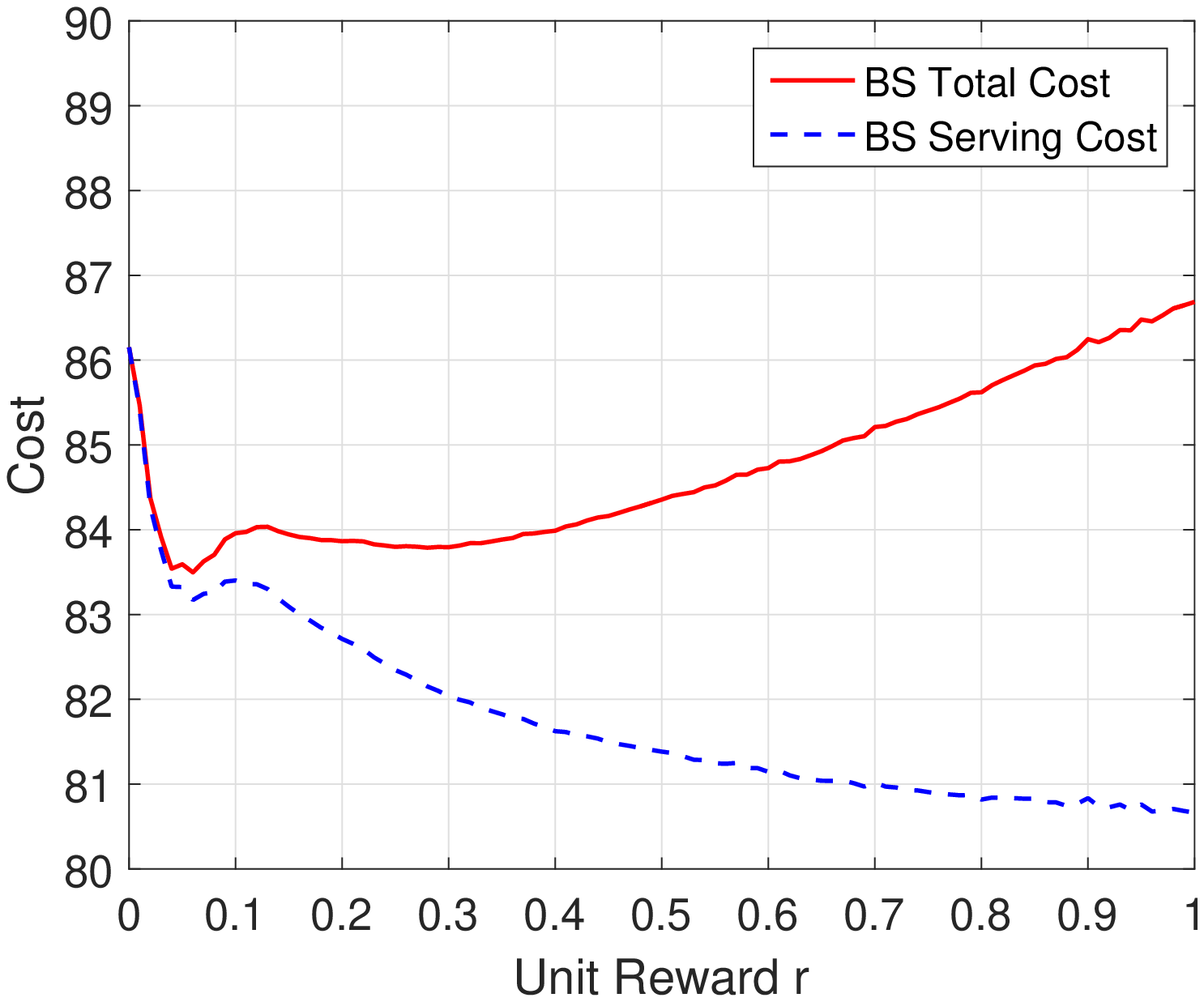}
\subcaption{BS total cost and serving cost.}\label{fig:BScosts}
\end{minipage}%
\begin{minipage}[t]{.5\linewidth}
\centering
        \includegraphics[width=4.4cm]{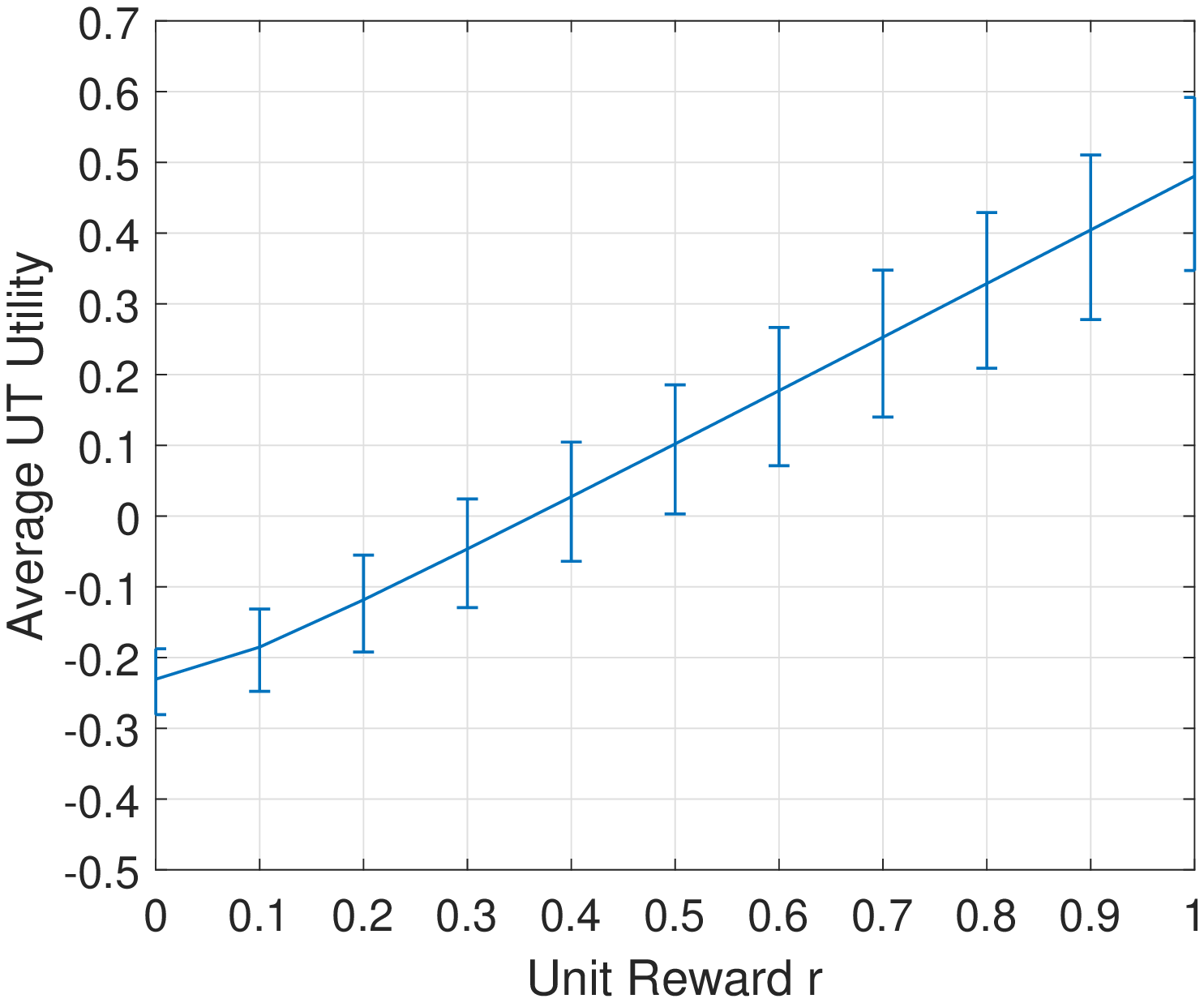}
\subcaption{UT utilities.}\label{fig:UTevo}
\end{minipage}
\caption{Effect of reward on BS cost and UT cost.}\label{fig:uniform}
\end{figure}

\subsubsection{Comparison of Caching Schemes}

In Figure~\ref{fig:BScomp}, we compare the performance of the proposed Stackelberg Caching scheme with the four baseline schemes in Section VI-B.
We see that the BS serving cost under the proposed caching scheme is the lowest among all caching schemes. In particular, the BS serving cost of the proposed SC is lower than that of PAC, GC, RCC, and FC by $12\%$, $37\%$, $68\%$, and $27\%$, respectively.

For convenience of illustration, we define the UT cost as the negative of UT utility and depict the UT costs of the five schemes in Figure~\ref{fig:UTcomp}. We randomly select two UTs and depict their costs as well as the average UT cost for all schemes. We observe that on average, the UT cost under the proposed caching scheme is lower than that of PAC, RCC and FC, and does not possess much advantage over the Greedy Caching. In particular, the average UT cost of SC is lower than that of PAC, RCC, and FC by $9\%$, $60\%$, and $39\%$, respectively.

\begin{figure}[t]
\begin{minipage}[t]{.5\linewidth}
\centering
        \includegraphics[scale=.32]{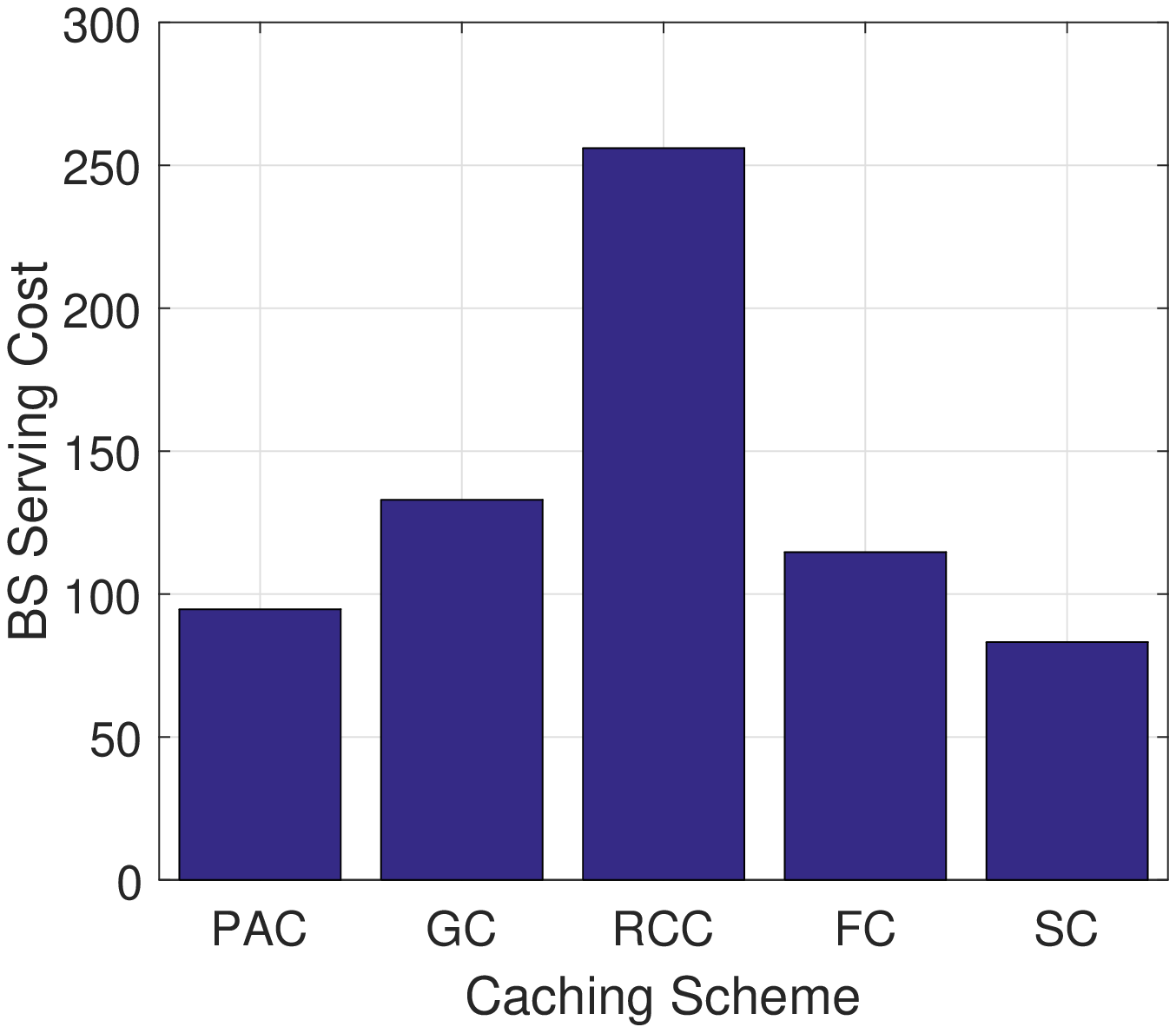}
\subcaption{Comparison on BS serving cost.}\label{fig:BScomp}
\end{minipage}%
\begin{minipage}[t]{.5\linewidth}
\centering
        \includegraphics[scale=.32]{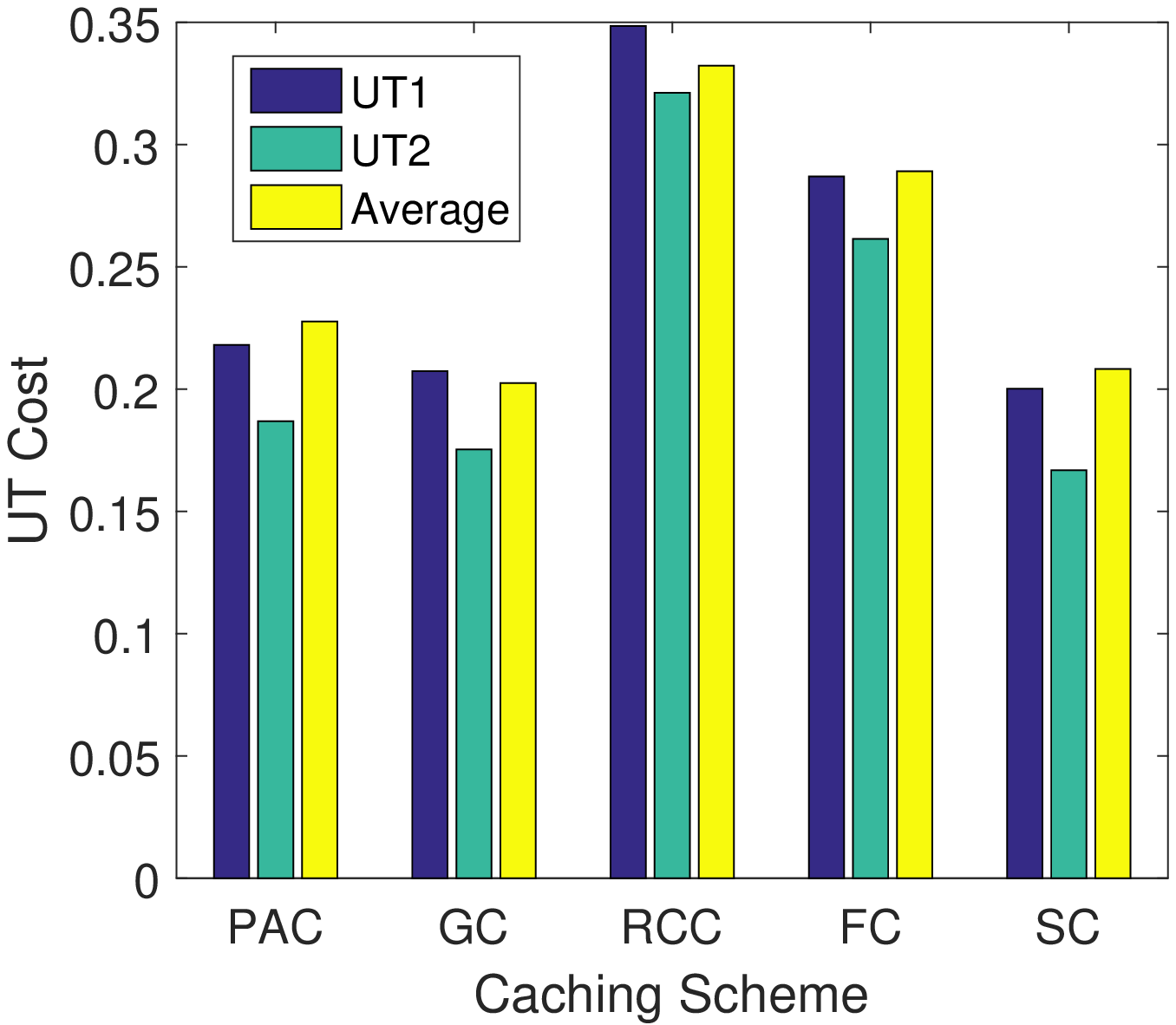}
\subcaption{Comparison on UT cost.}\label{fig:UTcomp}
\end{minipage}
\caption{Comparison of different caching schemes.}\label{fig:Comp}
\end{figure}

%

\subsubsection{Effect of Zipf Parameter}
We illustrate the effect of UTs' preference on the average UT utility of the proposed scheme and the aforementioned four baseline schemes, as illustrated in Figure~\ref{fig:ZipfUT}.
We see that as $\alpha$ increases (from 0.4 to 1.5), the average UT utilities of SC, PAC, GC increases. We can observe that the average UT utility of our proposed scheme is larger than that of any other baseline schemes for all $\alpha$. In addition, the gain of UT utility under Stackelberg Caching over other schemes increases as $\alpha$ increases. This indicates that the proposed caching scheme is more beneficial to UTs if UTs' request pattern is more heterogeneous.
\begin{figure}[htbp]
\centering
\includegraphics[width=7cm]{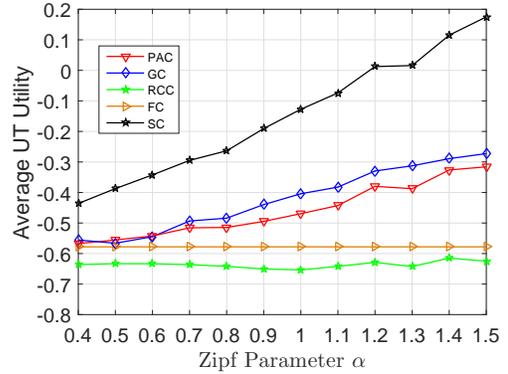}
\caption{Effect of Zipf parameter on average UT utility.}\label{fig:ZipfUT}
\end{figure}


\section{Conclusion}
In this paper, we address the cost minimization problem for the BS by introducing an incentive mechanism to encourage more content sharing among UTs. We formulate the conflict among UTs, as well as the tension between BS and UTs as a Stackelberg game. The BS determines the rewarding policy to minimize his total cost. On the other hand, each UT aims to maximize his utility by choosing his caching policy.  We show the existence of the equilibrium and propose an iterative gradient algorithm (IGA) to obtain the Stackelberg Equilibrium. We also study the impact of incentives on caching strategies analytically for a special case. We compare the performance of the caching scheme at SE with four baseline caching schemes with no incentives. We show that the caching scheme under the proposed incentive mechanism performs better than other baseline schemes in terms of lower BS serving cost and higher average utility for UTs.

\bibliographystyle{IEEEtran}
\bibliography{IEEEabrv,icc16}

\end{document}